\date{}
\title{A Characterization of all Stable Minimal Separator Graphs} 
\author{Mrinal Kumar \and Gaurav Maheswari \and N.Sadagopan} 
\institute{Department of Computer Science and Engineering, Indian Institute of Technology, Chennai-600036, India.\\
\email{\{mrinalk,gauravm,sadagopu\}@cse.iitm.ac.in}}
\begin{document}
\maketitle
\begin{abstract}
In this paper, our goal is to characterize two graph classes based on the properties of minimal vertex (edge) separators.   We first present a structural characterization of graphs in which every minimal vertex separator is a stable set.  We show that such graphs are precisely those in which the induced subgraph, namely, a cycle with exactly one chord is forbidden.  We also show that deciding maximum such forbidden subgraph is NP-complete by establishing a polynomial time reduction from maximum induced cycle problem \cite{garey}.  This result is of independent interest and can be used in other combinatorial problems.  Secondly, we prove that a graph has the following property: every minimal edge separator induces a matching (that is no two edges share a vertex in common) if and only if it is a tree. 

% We also prove that a graph in which for every pair of vertices there exists at least one minimal edge separator which is a matching then it is a triangle free graph. 
% We also present a polynomial time recognizability algorithm to recognize such graphs.
\end{abstract}
\section{Introduction}
Vertex connectivity is a classical topic in graph theory and motivated many problems in structural graph theory.   One such problem is the study of constrained vertex separators.  In particular, clique separators, stable separators, and balanced separators are the most popularly studied constrained vertex separator in the literature \cite{dirac,clique,stable,marx}.  This line of study was initiated by Dirac \cite{dirac} with a structural characterization of all clique vertex separators graphs.  More precisely, Dirac addressed a fundamental question of characterizing graphs in which every minimal vertex separator is a clique.\\
A graph is chordal (also known as triangulated), if there is no induced cycle of length at least 4.  Alternately, every cycle of length at least 4 has a chord.   In \cite{dirac} Dirac presented a structural characterization of chordal graphs with respect to vertex separators.  The famous characterization says, a graph is chordal if and only if every minimal vertex separator is a clique.   While chordal graphs and its structural properties have received much attention in the literature, the analogous question, characterize graphs such that every minimal vertex separator is an independent set has not received attention in the past.    \\
%Chordal graphs forbid any induced cycle of length at least 4 and this forbidden structural characterization excludes infinitely many graphs.   In this similar line, it is interesting to find a forbidden structural characterization of graphs such that every minimal vertex separator is an independent set.  \\ 
Another motivation for the study of stable minimal vertex separator graphs is from the theory of {\em contractible edges}.   In a $k$-connected graph, an edge is said to be contractible if its contraction does not decrease the connectivity below $k$.  This problem and its structural study was initiated by Tutte \cite{tutte}.  Subsequently, Saito \cite{saito} characterized contractible edges in 3-connected graphs and in \cite{saito} it is shown that there exists a contractible edge in every 3-connected graphs.  Unlike 3-connected graphs, there are 4-connected graphs without any contractible edge and these graphs are called {\em critically} 4-connected graphs. Martinov in \cite{martinov} characterized critically 4-connected graphs.  A natural question is to characterize critically $k$-connected graphs for any $k$.  The following lemma relates vertex connectivity and edge contraction.
\begin{lemma}\cite{saito}
Let $G$ be a $k$-connected graph and $G.e$ denote the graph obtained from $G$ by contracting the edge $e=\{u,v\}$.  $G.e$ is $k$-connected iff for any minimum vertex separator $S$ of $G$, $\{u,v\} \not\subseteq S$.  
\end{lemma}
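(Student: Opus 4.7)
The plan is to prove both directions by contrapositive, exploiting the natural correspondence between vertex separators of $G$ that contain $\{u,v\}$ wholly (or not at all) and vertex separators of $G.e$. Throughout, I let $w$ denote the vertex of $G.e$ obtained by contracting $e=\{u,v\}$, and I use the fact that the only adjacency altered by the contraction involves $u$ and $v$.

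For the forward direction, I would show: if some minimum vertex separator $S$ of $G$ satisfies $\{u,v\} \subseteq S$, then $G.e$ is not $k$-connected. Since $G$ is $k$-connected, $|S| = k$, and the natural candidate is $S' = (S \setminus \{u,v\}) \cup \{w\}$, of size $k-1$. I would verify that $S'$ genuinely separates $G.e$ by tracing two nonempty components $A, B$ of $G - S$ through the contraction: since both endpoints of the contracted edge lie in $S$, no new inter-component edge is created, so the images of $A$ and $B$ remain disconnected in $G.e - S'$. This exhibits a separator of size less than $k$ in $G.e$, contradicting $k$-connectivity.

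For the reverse direction, I would suppose $G.e$ has a separator $T$ with $|T| < k$ and then produce a minimum separator of $G$ containing both $u$ and $v$. I split on whether $w \in T$. If $w \notin T$, the set $T$ itself is a separator of $G$: a $u$-$v$ path in $G$ would project to a walk in $G.e$ avoiding $T$, and any other broken connection in $G.e - T$ lifts to a broken connection in $G - T$; this contradicts the $k$-connectivity of $G$. If $w \in T$, I would set $T' = (T \setminus \{w\}) \cup \{u,v\}$, of size at most $k$, and show via the same projection/lifting argument that $T'$ separates $G$. Then $k$-connectivity of $G$ forces $|T'| = k$, so $T'$ is a minimum separator of $G$ with $\{u,v\} \subseteq T'$, contradicting the hypothesis.

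The main technical obstacle is not the arithmetic of $|S'|$ and $|T'|$ but confirming that each constructed set actually separates the relevant graph, i.e.\ that both sides stay nonempty and that the contraction neither creates a shortcut through $w$ nor destroys one when $w$ is split. Both worries reduce to the single observation that contraction modifies adjacencies only at the pair $\{u,v\}$, which by construction is eliminated in every case considered, so the remainder of the graph's path structure is preserved bijectively.
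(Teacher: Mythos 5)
The paper does not prove this lemma at all --- it is quoted from Saito's work with a citation --- so there is no internal proof to measure you against. Judged on its own, your argument is the standard one and is correct: the heart of both directions is the observation that $G.e$ minus a set containing $w$ is literally the same graph as $G$ minus the corresponding set containing both $u$ and $v$, so separators of size $k$ through $\{u,v\}$ in $G$ and separators of size $k-1$ through $w$ in $G.e$ are in bijection, while a separator of $G.e$ avoiding $w$ would already be a too-small separator of $G$. All four cases you consider go through. The only loose ends are degenerate ones inherited from the statement itself rather than from your proof: your reverse direction assumes that ``$G.e$ not $k$-connected'' yields a separator of size less than $k$, which fails when $G.e$ is complete (e.g.\ $G=K_{k+1}$ gives $G.e=K_k$ with no separator at all, and the lemma as literally stated is false there); and the paper's convention that ``$k$-connected'' means $\kappa=k$ exactly would also require ruling out an \emph{increase} of connectivity under contraction, which can genuinely happen. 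Under the usual reading ($\kappa\ge k$, $G$ not complete), your proof is complete and is exactly the argument one finds in the literature.
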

In the light of the above lemma, we pose a natural question, which is to characterize graphs such that every edge of it is contractible.  Equivalently, characterize graphs in which every minimum vertex separator is a stable set.  However, from the point of view of structural graph theory, it is appropriate to characterize graphs in which every minimal vertex separator is a stable set.   With these two motivations the important contribution in this paper are the following:
\begin{itemize}
\item Graph $G$ does not have a cycle of length at least 4 with exactly one chord as an induced subgraph (1-chord subgraph) if and only if every minimal vertex separator in $G$ is a stable set.  \\
\item We establish the fact that deciding whether a graph has a maximum 1-chord subgraph is NP-complete. \\
\item We have also looked at an analogous question in the edge connectivity setting.  We show that the class of graphs in which every minimal edge separator induces a matching are precisely the class of trees.
\end{itemize}
%The paper is organized as follows: Section \ref{prelims} describes graph theoretic preliminaries and notation used in this paper.  Structural characterization of all stable vertex separators is presented in Section \ref{stru-char}.   Concluding remarks and further directions are reported in Section \ref{conclude}.
\section{Graph Preliminaries}
\label{prelims}
Notation and definitions are as per \cite{west,golu}.  Let $G =(V,E)$ be an undirected non weighted simple graph, where $V(G)$ is the set of vertices and $E(G) \subseteq \{\{u,v\}~|~ u,v \in V(G)$, $u \not= v \}$.  A separating set or a vertex separator of a graph $G$ is a set $S \subseteq V(G)$ such that the induced subgraph, denoted by $G \setminus S$, on the vertex set $V(G) \setminus S$ has more than one connected component. The vertex connectivity of a graph $G$, written  $\kappa(G)$, is the minimum cardinality of a vertex set $S$ (minimum vertex separator) such that $G \setminus S$ is disconnected or has only one vertex. A graph is $k$-connected if its vertex connectivity is $k$. A vertex separator  $S$ is called a $a$-$b$ vertex separator iff $S$ disconnects $a$ and $b$. i.e. the graph $G \setminus S$ has at least two connected components such that $a$ and $b$ are in distinct components.  A $a$-$b$ vertex separator is said to be a minimal vertex separator iff no proper subset of it is a $a$-$b$ vertex separator.  A minimum $a$-$b$ vertex separator is a minimal $a$-$b$ vertex separator of least size.  For $S \subset V(G)$, $G[S]$ denote the graph induced on the set $S$.  For a $a$-$b$ vertex separator $S$ let $\{C_1,\ldots,C_r\}$ denote the set of connected components in $G \setminus S$.  A chord of a cycle $C$ is an edge joining a pair of nonadjacent vertices in $C$.  A graph is said to be chordal if every cycle of length at least 4 has a chord in it.  A subgraph $H$ of $G$ is said to be {\em 1-chord} if $G[V(H)]$ is a cycle of length at least 4 with exactly one chord.  $G$ is {\em 1-chord free} if $G$ contains no such $H$.  An example is shown in Figure \ref{1chordfig}.
\begin{figure}
\begin{center}
\includegraphics[scale=0.6]{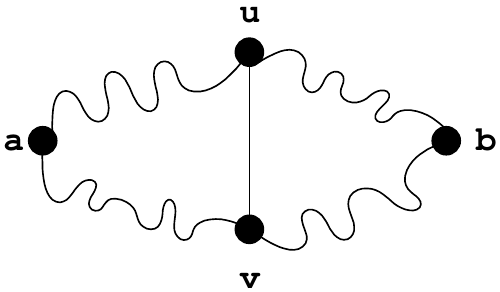}
\caption{1-chord graph}\label{1chordfig}
\end{center}
\end{figure}
\section{Stable Vertex Separator graphs: Structural Characterization}
In this section, we characterize graphs in which all minimal $(a,b)$ vertex separators are stable sets.   For simplicity, we use minimal vertex separator instead of minimal $(a,b)$ vertex separator and the pair $(a,b)$ will be clear from the context.  The following well known lemma is a key lemma in the proof of our main theorem and we describe a proof of the same to make the paper self contained.
\begin{lemma}
\label{char-lemma}
Let $S$ be a $(a,b)$ vertex separator in $G$ and $C_1$ and $C_2$ are connected components in $G \setminus S$ such that $a \in C_1$ and $b \in C_2$.  $S$ is minimal iff for each $v \in S$ there exists $u \in C_i, i \in \{1,2\}$ such that $\{u,v\} \in E(G)$.  
\end{lemma}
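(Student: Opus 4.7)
I would prove both directions by short path-based arguments that exploit the component structure of $G \setminus S$. The condition is understood in its standard sense: each $v \in S$ must have at least one neighbor in $C_1$ and at least one neighbor in $C_2$ (for if $v$ had neighbors in only one side, then $S \setminus \{v\}$ would still separate $a$ from $b$, contradicting minimality).

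For the $(\Rightarrow)$ direction, fix any $v \in S$. Since $S$ is minimal, $S \setminus \{v\}$ fails to separate $a$ from $b$, so there is a simple $(a,b)$-path $P$ in $G \setminus (S \setminus \{v\})$. Because $S$ itself separates $a$ from $b$, the path $P$ must use $v$, and being simple uses it exactly once; write $P = a,\ldots,u_1,v,u_2,\ldots,b$. The two subpaths $a \to u_1$ and $u_2 \to b$ are disjoint from $S$, hence they live entirely in $G \setminus S$. Connectivity inside $G \setminus S$ then forces $u_1$ into the component of $a$, namely $C_1$, and $u_2$ into the component of $b$, namely $C_2$. So $v$ has the required neighbors in both $C_1$ and $C_2$.

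For $(\Leftarrow)$, assume every $v \in S$ has neighbors $u_1(v) \in C_1$ and $u_2(v) \in C_2$, and suppose for contradiction that some $S' \subsetneq S$ still separates $a$ and $b$. Pick any $v \in S \setminus S'$. Since $S' \subseteq S$, the components $C_1$ and $C_2$ sit inside connected subgraphs of $G \setminus S'$, so there exist an $a$-to-$u_1(v)$ path and a $u_2(v)$-to-$b$ path in $G \setminus S'$. Concatenating these with the edges $u_1(v)v$ and $vu_2(v)$ yields an $(a,b)$-walk in $G \setminus S'$, contradicting the choice of $S'$.

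The argument is elementary; the one point that requires a little care is in the forward direction, where one must argue that the two immediate neighbors of $v$ on the rerouted path lie specifically in $C_1$ and $C_2$ and not in some other component of $G \setminus S$. This is forced because the rest of $P$ avoids $S$ entirely, so the prefix $a \to u_1$ cannot leave the component of $a$, and symmetrically the suffix $u_2 \to b$ cannot leave the component of $b$.
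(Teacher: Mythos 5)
Your proof is correct, and both directions rest on the same elementary component-and-path reasoning the paper uses. Your forward direction is essentially the paper's argument in direct rather than contrapositive form: the paper supposes some $v\in S$ has no neighbour in $C_1$ and deletes it, while you use minimality to produce an $(a,b)$-path through $v$ and read off a neighbour on each side; the content is identical, though your version makes explicit the point (which the paper leaves as ``clearly'') of why the neighbours land in $C_1$ and $C_2$ specifically. Where you genuinely improve on the paper is the converse. The paper assumes a proper subset $S'\subset S$ separates $a$ and $b$ and then asserts that there ``must exist'' a $w\in S\setminus S'$ with no neighbour in $C_1$ or in $C_2$ --- a claim it never justifies, and which is not the immediate negation of anything in play. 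You instead pick an arbitrary $v\in S\setminus S'$ and exhibit a concrete $(a,b)$-walk $a\to u_1(v)\to v\to u_2(v)\to b$ in $G\setminus S'$, using that $C_1$ and $C_2$ sit inside connected pieces of $G\setminus S'$ and that $v$, $u_1(v)$, $u_2(v)$ all avoid $S'$. That closes a small logical gap in the paper's version at no extra cost.
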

\begin{proof}
Suppose there exists $v \in S$ for all $u \in C_1$  there is no edge $\{u,v\}$ in $G$.  Consider the set $S'=S \setminus \{v\}$.  Clearly in $G \setminus S'$ the pair $\{a,b\}$ are in distinct components.  This implies that $S'$ is a $(a,b)$ vertex separator.   Since $S' \subset S$, implies that $S$ is not a minimal $(a,b)$ vertex separator.  This contradicts the given fact that $S$ is minimal $(a,b)$ vertex separator.  Therefore, for each $v \in S$ there exists $u \in C_1$ such that $\{u,v\} \in E(G)$.  A similar argument proves that for each $v \in S$ there exists $u \in C_2$ such that $\{u,v\} \in E(G)$.   Conversely,  assume there exists $S' \subset S$ in $G$ such that $S'$ is a minimal $(a,b)$ vertex separator.   Let $C'_1$ and $C'_2$ are the connected components in $G \setminus S'$ such that $a \in C'_1$ and $b \in C'_2$.   This implies that for each $z \in C'_1$ and for each $z' \in C'_2$ every path $P_{zz'}$ from $z$ to $z'$ contains an element from $S'$.  Since $S' \subset S$ there must exist $w \in S \setminus S'$ such that $\{w,w'\} \notin E(G)$ for any $w' \in C_1$ or $\{w,w''\} \notin E(G)$ for any $w'' \in C_2$.   However, this contradicts the given hypothesis that  for each $v \in S$ there exists $u \in C_i, i \in \{1,2\}$ such that $\{u,v\} \in E(G)$.   Therefore, the sufficiency follows.  Hence the lemma.  \qed
\end{proof}
\begin{theorem}
\label{1chordchar}
$G$ is 1-chord free if and only if every minimal $(a,b)$ vertex separator of $G$ is an independent set. 
\end{theorem}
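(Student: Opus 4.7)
The plan is to prove both directions, handling the forward implication by a direct contradiction and the reverse direction by an explicit construction.

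For the forward direction, assume $G$ is 1-chord free but, for contradiction, some minimal $(a,b)$-separator $S$ contains adjacent vertices $u,v$. By Lemma~\ref{char-lemma}, $u$ has neighbours $x_1\in C_1$ and $y_1\in C_2$, while $v$ has neighbours $x_2\in C_1$ and $y_2\in C_2$, where $C_1,C_2$ are the components of $a,b$ respectively in $G\setminus S$. Take a shortest $u$-$v$ path $P_1$ in $G[C_1\cup\{u,v\}]$ whose interior lies in $C_1$, and symmetrically $P_2$ through $C_2$; each has length at least two and is induced (any shortcut would produce a shorter path). Concatenating $P_1$ with the reverse of $P_2$ yields a cycle of length at least four whose only chord is $\{u,v\}$: chords within one $P_i$ are forbidden by induced-ness, and no edge can join the interior of $P_1$ (inside $C_1$) to the interior of $P_2$ (inside $C_2$) because these lie in distinct components of $G\setminus S$. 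This cycle together with the chord $\{u,v\}$ is an induced 1-chord subgraph, contradicting the hypothesis.

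For the reverse direction I argue contrapositively: given an induced 1-chord subgraph $H$ with cycle $v_1, v_2, \ldots, v_n$ and sole chord $\{v_1, v_k\}$ ($n\ge 4$, $3\le k\le n-1$), I exhibit a minimal $(v_2,v_n)$-separator containing the adjacent pair $\{v_1,v_k\}$. Since $H$ is induced, $v_2$ and $v_n$ are non-adjacent in $G$. Partition the remaining cycle into arcs $A_1=\{v_2,\ldots,v_{k-1}\}$ and $A_2=\{v_{k+1},\ldots,v_n\}$; again because $H$ is induced, $G$ has no edges between $A_1$ and $A_2$. Pick a minimal set-separator $S'$ of $A_1$ from $A_2$ in $G\setminus\{v_1,v_k\}$ (possibly empty), which by definition satisfies $S'\cap(A_1\cup A_2)=\emptyset$, and set $S:=S'\cup\{v_1,v_k\}$. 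Clearly $S$ is a $(v_2,v_n)$-separator in $G$. For minimality, $v_1$ is indispensable because otherwise the path $v_2, v_1, v_n$ survives, and each $s\in S'$ is indispensable because its removal reopens an $A_1$-$A_2$ path in $G\setminus\{v_1,v_k\}$, which, combined with the cycle edges that keep $A_1$ and $A_2$ internally connected, produces a $v_2$-$v_n$ path in $G\setminus(S\setminus\{s\})$.

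The delicate step is verifying that $v_k$ is indispensable. The condition $S'\cap(A_1\cup A_2)=\emptyset$ is crucial here: restoring $v_k$ to $G\setminus(S\setminus\{v_k\})$ makes the full cycle path $v_2, v_3, \ldots, v_{k-1}, v_k, v_{k+1}, \ldots, v_n$ available, because every one of its vertices lies in $A_1\cup\{v_k\}\cup A_2$, which is disjoint from $S\setminus\{v_k\}=S'\cup\{v_1\}$. This is the single place where the arc structure of $H$ is exploited---namely that $v_k$ bridges the two cycle arcs via cycle edges---and it explains why $S'$ must be chosen as an arc-to-arc set-separator rather than an arbitrary $(v_2,v_n)$-separator in $G\setminus\{v_1,v_k\}$: in the latter choice, interior arc vertices could enter $S'$ and destroy the bridge argument. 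Since $\{v_1,v_k\}\in E(G)$, the minimal separator $S$ is not a stable set, completing the contrapositive.
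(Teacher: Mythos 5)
Your proof is correct, and both directions follow the same overall route as the paper: for necessity, two shortest $u$--$v$ paths routed through the two components of $G\setminus S$ are glued into a cycle whose unique chord is $\{u,v\}$; for sufficiency, a minimal separator containing both endpoints of the chord is exhibited. The one place where you genuinely diverge is in how the sufficiency separator is built and certified. The paper simply takes ``any minimal vertex separator separating the two arcs $X$ and $Y$,'' observes it must contain $u$ and $v$, and then asserts it is also a \emph{minimal} $(a,b)$ separator --- a step that is not justified there, since minimality as an $X$--$Y$ separator does not immediately transfer to minimality as an $(a,b)$ separator. You instead fix the chord endpoints up front, take a minimal arc-to-arc separator $S'$ in $G\setminus\{v_1,v_k\}$ that is disjoint from both arcs, set $S=S'\cup\{v_1,v_k\}$, and check indispensability of every vertex of $S$ by hand; your observation that $S'$ must avoid the arc interiors so that the full cycle path certifies the indispensability of $v_k$ is exactly the detail the paper glosses over. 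So your argument is a more watertight execution of the same strategy: what it buys is a complete minimality verification, at the cost of a slightly longer construction.
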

\begin{proof}
{\em Necessity:} Given that $G$ is 1-chord free we now show that every minimal $(a,b)$ vertex separator is an independent set.  We present a proof by contradiction.  Suppose there exists a minimal $(a,b)$ vertex separator $S$ such that $G[S]$ is not an independent set.  This implies that there exists $u,v \in S$ such that $\{u,v\} \in E(G)$.  Consider the connected components $\{C_1,\ldots,C_r\}$ in $G \setminus S$.  Without loss of generality we assume that $a \in C_1$ and $b \in C_2$.  Since $S$ is a minimal $(a,b)$ vertex separator, by Lemma \ref{char-lemma}, for each $w \in S$ there exists $z \in C_1$ such that $\{w,z\} \in E(G)$.  Similarly there exists $z' \in C_2$ such that $\{w,z'\} \in E(G)$.  Consider shortest paths $P_1=\{u,z_1,\ldots,z_k,v\}, k \geq 1, z_i \in C_1$ and $P_2=\{u,w_1,\ldots,w_l,v\}, l \geq 1, w_i \in C_2$.  Since $P_1$ and $P_2$ are shortest paths between $u$ and $v$ and $\{u,v\} \in E(G)$ implies that $P_1$ and $P_2$ together form a cycle of length at least 4 with exactly one chord.  In other words, $G$ contains an induced 1-chord graph as a subgraph.  However, this is a contradiction to the given hypothesis.  Therefore, our assumption that there exists a minimal $(a,b)$ vertex separator $S$ such that $G[S]$ is not an independent set is wrong.  Hence the necessity follows. \\
{\em Sufficiency:} We prove that $G$ is 1-chord free by the method of contradiction.  Assume that in $G$ there exists an induced 1-chord graph $H$ as a subgraph.  We now construct a minimal $(a,b)$ vertex separator $S$ such that $G[S]$ is not an independent set.  Let $V(H)=\{u,a,z_1,\ldots,z_l=v,y_1,\ldots,y_k=b\}, l \geq 1, k \geq 1$ and $E(H)=\{\{u,a\},\{a,z_1\},\{v,y_1\},\{b,u\},\{u,v\}\} \cup  \{\{z_i,z_{i+1}\}, 1 \leq i \leq l-1 \} \cup \{ \{y_j,y_{j+1}\}, 1 \leq j \leq k-1\}$.  Let $X=\{a\} \cup \{z_i ~|~ z_i \in V(H)\} \setminus \{v\}$ and $Y=\{b\} \cup \{y_j ~|~ y_j \in V(H)\}$.  Note that any minimal vertex separator $S$ in $G$ separating $X$ and $Y$ must contain $u$ and $v$.   Moreover, we observe that $S$ is also a minimal $(a,b)$ vertex separator in $G$.  The reason this is true is due to the following: clearly, $u \in S$ and if $ v \notin S$ then there exists a path between $a$ and $b$ through $v$, contradicting the fact that $S$ is a vertex separator in $G$.  Therefore, $S$ is a minimal $(a,b)$ vertex separator such that $\{u,v\} \subset S$.  Since $\{u,v\} \in E(G)$ implies that $G[S]$ is not an independent set. A contradiction to the hypothesis.  Therefore, $G$ is 1-chord free.  Hence the theorem.  \qed 
\end{proof}
We now present two more combinatorial observations on 1-chord free graphs with respect to its vertex connectivity.
\begin{lemma}
\label{2connectobs}
Let $G$ be a non complete at least 2-connected graph.  If $G$ is 1-chord free then $G$ is triangle free.
\end{lemma}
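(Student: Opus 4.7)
The plan is to argue by contradiction. Suppose $G$ is 2-connected, non-complete, 1-chord free, yet contains a triangle $T = \{x,y,z\}$; we will exhibit an induced 1-chord subgraph, contradicting the hypothesis via Theorem \ref{1chordchar}. A preliminary observation: for every edge $\{u,v\}$ of $T$, the pair $\{u,v\}$ cannot be a vertex separator of $G$, because otherwise, being of size $2$ in a $2$-connected graph, it would be a minimum and hence minimal separator, and Theorem \ref{1chordchar} would force it to be stable --- contradicting $uv \in E(G)$. Hence $G \setminus \{u,v\}$ is connected for every edge of $T$. Moreover, $|V(G)| \ge 4$ since $G$ is non-complete and $2$-connected, so at least one vertex lies outside $T$.

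I would then case-split on the quantity $|N(w) \cap T|$ for vertices $w \in V(G) \setminus T$. The easy sub-cases produce a 1-chord immediately. If some $w$ satisfies $|N(w) \cap T| = 2$, say $w$ is adjacent to $x, z$ but not $y$, then $G[\{w,x,y,z\}]$ is precisely the 4-cycle $w,x,y,z,w$ with the unique chord $xz$. If every outside vertex satisfies $|N(\cdot) \cap T| = 3$, then all non-edges of $G$ sit inside $V(G) \setminus T$ (at least one such non-edge exists by non-completeness), and for any non-adjacent $a, b \in V(G) \setminus T$ the induced subgraph $G[\{a,x,b,y\}]$ is a 4-cycle $a,x,b,y,a$ with the unique chord $xy$. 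Either way, a 1-chord subgraph arises.

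The remaining case is that some outside vertex $w$ has $|N(w) \cap T| \in \{0,1\}$; the value-$0$ sub-case reduces to the value-$1$ sub-case by replacing $w$ with the penultimate vertex of a shortest $w$-to-$T$ path, which by minimality lies outside $T$ and has at least one neighbour in $T$. Assume, without loss of generality, that $w$ is adjacent to $x$ only. Using the connectedness of $G \setminus \{x,z\}$, let $R = w, r_1, \ldots, r_{m-1}, y$ be a shortest $w$-$y$ path inside $G \setminus \{x,z\}$; we have $m \ge 2$ because $wy \notin E(G)$, and $R$ is an induced path. Form the cycle $C : x, w, r_1, \ldots, r_{m-1}, y, z, x$ of length $m + 3 \ge 5$; the edge $xy$ is a chord of $C$ because $x$ and $y$ are at cycle-distance $2$ through $z$. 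Shortest-path properties of $R$ rule out every candidate chord of $C$ except possibly edges of the form $xr_i$ or $zr_i$.

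The main obstacle is handling these residual chord candidates. The plan is to pick the smallest index $i^\star$, if any exists, for which either $xr_{i^\star}$ or $zr_{i^\star}$ is an edge, and to truncate $C$ at $r_{i^\star}$: the minimality of $i^\star$ eliminates further chords coming from earlier indices, while the induced nature of $R$ disposes of the remaining candidates, so that exactly one of the triangle edges survives as the unique chord of the truncated cycle. This delivers a 1-chord subgraph in every subcase, yielding the desired contradiction and forcing $G$ to be triangle free.
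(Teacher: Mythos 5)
Your overall strategy (exhibit an induced $1$-chord subgraph around a triangle $T=\{x,y,z\}$, after using Theorem \ref{1chordchar} to conclude that no edge of $T$ is a vertex separator) is reasonable, and the first two cases ($|N(w)\cap T|=2$ for some outside vertex $w$; $|N(w)\cap T|=3$ for \emph{all} outside vertices) are correct. But the case analysis is not exhaustive. The reduction of the value-$0$ sub-case to the value-$1$ sub-case fails: the penultimate vertex of a shortest $w$-to-$T$ path is only guaranteed to have \emph{at least} one neighbour in $T$, and it may have three; that situation is covered by none of your cases, since your second case requires \emph{every} outside vertex to have three neighbours in $T$. Concretely, take the triangle $T=\{x,y,z\}$, two further vertices $u,u'$ each adjacent to all of $x,y,z$ with $uu'\notin E$, and a vertex $v$ adjacent only to $u$ and $u'$. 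This graph is $2$-connected, non-complete and contains a triangle; $v$ has zero neighbours in $T$, no outside vertex has exactly one or two, and not all have three, so your argument never produces a $1$-chord --- one exists ($G[\{u,x,u',y\}]$ is a $4$-cycle with unique chord $xy$), but only by comparing two vertices that both dominate $T$, a configuration your cases never examine.

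The truncation step in the main case is also not merely unfinished but incorrect as described. If $i^\star$ is the least index with $xr_{i^\star}\in E$ or $zr_{i^\star}\in E$, cutting $C$ at $r_{i^\star}$ so as to retain the earlier indices (the only way minimality of $i^\star$ helps) produces a cycle in which no triangle edge is a chord: when $xr_{i^\star}\in E$ and $zr_{i^\star}\notin E$ the cycle $x,w,r_1,\dots,r_{i^\star},x$ is chordless and meets $T$ only in $x$; when $zr_{i^\star}\in E$ and $xr_{i^\star}\notin E$ the cycle $x,w,\dots,r_{i^\star},z,x$ is likewise chordless. Keeping instead the far arc (which contains $y$ and the would-be chord $xy$) is not an option, because minimality of $i^\star$ says nothing about chords $xr_j$ or $zr_j$ with $j>i^\star$. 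A correct completion must branch on which of $x,z$ receives the first chord, reroute the closing arc through $y$ and/or $z$ accordingly (e.g.\ use $x,w,\dots,r_{i^\star},z,y,x$ with chord $zx$ in one sub-case, and the $4$-cycle on $\{x,z,r_{m-1},y\}$ or an iteration replacing $w$ by $r_{i^\star}$ in others); none of this is in your sketch. The paper sidesteps both difficulties by taking a shortest path between two triangle vertices that avoids the third, so that only one vertex's adjacencies to the path need to be controlled, and the first such adjacency immediately closes a cycle whose unique chord is a triangle edge.
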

\begin{proof}
Suppose $G$ is not triangle free.  Let $\{a,b,c\}$ induce a triangle in $G$.  Since $G$ is at least 2-connected and non complete, there must exist a path between $b$ and $c$ avoiding $a$.  Let $P_{bc}$ denote a shortest such path and $V(P_{bc})=\{b,z_1,\ldots,z_i,c\}, i \geq 1$.   We now show that $G$ contains 1-chord graph as a subgraph by considering three cases.  An illustration is given in Figure \ref{case}.   {\em Case 1:} There is no edge $\{a,z_j\}, 1 \leq j \leq i$, for any $z_j \in V(P_{bc})$.  Clearly, $\{a,b,c\}$ together with $P_{bc}$ induce a 1-chord subgraph in $G$.  However, we know that $G$ is 1-chord free.  A contradiction.  Therefore, $G$ is triangle free.  Note that for other two cases, $i \geq 2$ as $G$ is a non complete graph.  {\em Case 2:} There exists an edge $\{a,z_j\}, 1 \leq j \leq i-1$, for some $z_j \in V(P_{bc})$.   If $a$ is adjacent more than one $z_j$ then without loss of generality we choose the $z_j$ such that $j$ is the least.  Now, the set $\{c,a,b,z_1,\ldots,z_j\}$ induces a 1-chord subgraph with $\{a,b\}$ as the chord. A contradiction in this case too. {\em Case 3:} $\{a,z_j\} \in E(G), z_j=z_i$.   In this case the set $\{b,a,c,z_i\}$ is a 1-chord subgraph with $\{a,c\}$ as the chord.  A contradiction.  This completes our case analysis and we see that our assumption that $G$ is not triangle free is wrong.  Therefore, the claim follows. \qed
\end{proof}
\begin{figure}
\begin{center}
\includegraphics[scale=0.75]{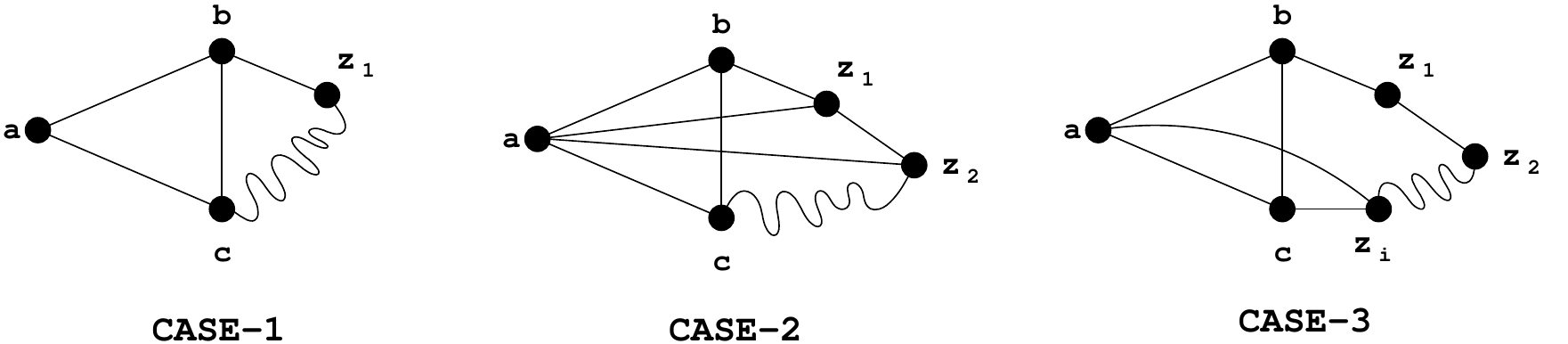}
\caption{An illustration for the proof of Lemma \ref{2connectobs}}\label{case}
\end{center}
\end{figure}
Note that the converse of the above lemma is not true.  For example, trees are triangle free and 1-chord free, however, trees are not 2-connected.  The following observation characterizes the connected components in 1-connected 1-chord free graphs.
\begin{lemma}
Let $G$ be an exactly 1-connected 1-chord free graph.   For a cut-vertex $v$ of $G$, let $\{C_1,\ldots,C_r\}$ denote the connected components in $G \setminus \{v\}$.  For each cut-vertex $v$, the subgraph induced on $V(C_i) \cup \{v\}$ is either a complete graph or a 1-chord free graph.
\end{lemma}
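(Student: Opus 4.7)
The plan is essentially a hereditary-subgraph argument. Fix a cut-vertex $v$ of $G$ and a connected component $C_i$ of $G\setminus\{v\}$, and set $H := G[V(C_i)\cup\{v\}]$. If $H$ happens to be complete, the first alternative of the lemma holds trivially, so the real content is to handle the case where $H$ is non-complete: in that case I want to show that $H$ itself is 1-chord free.

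The key auxiliary observation I would prove first is that being 1-chord free is hereditary under induced subgraphs: if $G$ is 1-chord free and $H$ is an induced subgraph of $G$, then $H$ is 1-chord free. The argument is definitional. Suppose, for contradiction, that $H$ contains an induced 1-chord subgraph $H'$, i.e.\ $H[V(H')]=H'$ is a cycle of length at least $4$ with exactly one chord. Because $H=G[V(H)]$, for any $W\subseteq V(H)$ we have $H[W]=G[W]$; applying this with $W=V(H')$ gives $G[V(H')]=H[V(H')]=H'$, so $H'$ is also an induced subgraph of $G$. This contradicts the hypothesis that $G$ is 1-chord free.

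With this auxiliary claim in hand, the lemma is immediate: apply it to $H=G[V(C_i)\cup\{v\}]$, which is by construction an induced subgraph of $G$, and conclude that $H$ is 1-chord free. Combined with the complete-graph alternative handled above, this gives the stated dichotomy.

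There is no serious obstacle here; the only thing to be careful about is the nested induced-subgraph step $H[V(H')]=G[V(H')]$, which relies on $H$ itself being induced in $G$. The value of the lemma lies not in its depth but in how it packages the cut-vertex decomposition of an exactly 1-connected 1-chord free graph into pieces that are either complete (and hence trivially well-understood) or again 1-chord free, setting the stage for a recursive or blockwise analysis of such graphs.
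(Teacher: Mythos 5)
Your proof is correct and takes essentially the same route as the paper, whose one-line proof cites the hereditary nature of 1-chord freeness together with the contrapositive of Lemma \ref{2connectobs}; the heredity half is exactly what you verify carefully. In fact your argument shows the second ingredient is not needed for the statement as written: a complete graph contains no induced cycle of length at least $4$ and is therefore itself 1-chord free, so the stated dichotomy collapses to the single claim that $G[V(C_i)\cup\{v\}]$ is 1-chord free, which follows from heredity alone.
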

\begin{proof}
The claim follows from the contrapositive of Lemma \ref{2connectobs} and the fact that 1-chord freeness is a hereditary property. \qed
\end{proof}
The characterization in Theorem \ref{1chordchar} can be used to test whether a given graph has the property that every minimal vertex separator is an independent set.  In particular, this calls for testing the existence of 1-chord subgraph in the given graph.  Two of the closely related problems are finding minimum (maximum) sized 1-chord subgraph of a graph.  In the next section, we show that decision version of maximum sized 1-chord subgraph is NP-complete by presenting a polynomial time reduction from Maximum induced cycle problem.  This result is of independent interest and can be used in other combinatorial problems.  The decision version of maximum 1-chord subgraph is given below.
\begin{center} {\em Maximum 1-chord subgraph problem}
\begin{tabular}{|p{14cm}|}
\hline \\ 
{\bf Instance:} Graph $G$, and an integer $l$\\
{\bf Question:} Is there a subgraph $H$ of $G$ such that $|V(H)| \geq l$ and $G[V(H)]$ is a 1-chord? \\
\hline 
\end{tabular}
\end{center} 
% We also highlight the fact that testing 1-chord freeness can be done in polynomial time as it is recently shown in \cite{test-bfree}. 
\begin{theorem}
\label{1chordredn}
Decision version of maximum 1-chord subgraph of a graph is NP-complete
\end{theorem}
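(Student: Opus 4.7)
The plan is to establish Theorem~\ref{1chordredn} in two movements: first verifying membership in NP, then exhibiting a polynomial-time reduction from the Maximum Induced Cycle problem (NP-complete by \cite{garey}). For NP membership, given a candidate vertex subset $V(H) \subseteq V(G)$, a polynomial-time verifier can check that $|V(H)| \geq l$, that $G[V(H)]$ has exactly $|V(H)| + 1$ edges with exactly two vertices of degree $3$ and the rest of degree $2$, and that removing one of the degree-$3$ incident edges leaves a Hamilton cycle of $G[V(H)]$; these conditions jointly certify the cycle-with-one-chord structure required of a 1-chord subgraph.

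For the hardness reduction, given an instance $(G, k)$, I would use the following two-layer construction. First, produce $G_1$ by subdividing every edge of $G$ exactly once: for each $e = \{u, v\} \in E(G)$, insert a fresh subdivision vertex $x_e$ and replace $e$ by the path $u - x_e - v$. Second, for each edge $f$ of $G_1$, introduce a new private vertex $w_f$ adjacent precisely to the two endpoints of $f$; let $G'$ denote the resulting graph, and set the threshold $l = 2k + 1$. The forward direction is transparent: a cycle of length $k$ in $G$ lifts through subdivision to an induced cycle $C_1$ of length $2k$ in $G_1$, and inserting $w_f$ alongside any edge $f$ of $C_1$ turns it into an induced subgraph of $G'$ that is a cycle of length $2k + 1$ (going around $C_1$ but detouring through $w_f$) with the edge $f$ as its unique chord, giving a 1-chord subgraph of size $2k + 1$.

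The backward direction is the heart of the argument. The key structural fact I would first establish is that $G_1$ is itself 1-chord free: since $G_1$ is bipartite with parts $V(G)$ and $\{x_e : e \in E(G)\}$, every subdivision vertex has $G_1$-degree exactly $2$, so it cannot serve as a degree-$3$ chord endpoint, and no edge of $G_1$ joins two original vertices, so no chord is even available. Hence any 1-chord $H$ in $G'$ must contain at least one $w_f$-vertex. Since every $w_f$ has $G'$-degree $2$, each $w_f \in V(H)$ must sit on the cycle of $H$ with both endpoints of $f$ as its cycle-neighbors, which in turn forces the edge $f \in E(G_1) \subseteq E(G')$ to appear as a chord of $H$; because $H$ admits only one chord, $V(H)$ contains exactly one such $w_{f_0}$, and the chord of $H$ is precisely $f_0$.

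With this structural dichotomy in hand, excising the unique $w_{f_0}$ from the cycle of $H$ and closing the gap via the chord $f_0$ yields an induced cycle of length $|V(H)| - 1 \geq 2k$ in $G_1$, which projects (by contracting each subdivision vertex) to a cycle of length at least $k$ in $G$, witnessing a yes-instance of the source problem. The main technical obstacle I anticipate is the careful degree-count argument ruling out two or more $w_f$-vertices in $V(H)$; once that is settled, the rest is mechanical bookkeeping.
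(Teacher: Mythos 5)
Your NP-membership argument and the structural analysis of the gadget graph $G'$ (that the subdivision $G_1$ is 1-chord free, that any 1-chord subgraph $H$ of $G'$ contains exactly one vertex $w_{f_0}$, and that the unique chord of $H$ must be $f_0$) are all correct. The gap is in the very last step of the backward direction. After deleting $w_{f_0}$ you obtain an induced cycle of even length $2m \geq 2k$ in $G_1$, and you then ``project'' it to a cycle of length $m \geq k$ in $G$. That projected cycle is indeed a cycle, but it need not be an \emph{induced} cycle of $G$: an edge $\{v_i,v_j\}$ of $G$ joining two non-consecutive vertices of the projected cycle does not survive subdivision as an edge of $G_1$ (only the subdivision vertex $x_{\{v_i,v_j\}}$ remembers it, and that vertex can simply be absent from your cycle), so such an edge creates no chord in $G_1$ and is invisible to your argument. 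Concretely, take $G=K_4$ and $k=4$: the subdivided Hamiltonian cycle $v_1-x_{12}-v_2-x_{23}-v_3-x_{34}-v_4-x_{14}-v_1$ is an induced $8$-cycle of $G_1$ (the subdivision vertices $x_{13}$ and $x_{24}$ of the two diagonals are omitted), so $G'$ contains a 1-chord subgraph on $9=2k+1$ vertices, yet $K_4$ has no induced cycle of length $4$. Your map therefore sends a no-instance of Maximum Induced Cycle to a yes-instance of the 1-chord problem, and the claimed equivalence fails.

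What your construction actually establishes is that $G$ has a (not necessarily induced) cycle of length at least $k$ if and only if $G'$ has a 1-chord subgraph on at least $2k+1$ vertices; that is, it is a correct reduction from Longest Cycle rather than from Maximum Induced Cycle. Since Longest Cycle is NP-complete, the theorem could be rescued by switching the source problem and rewriting the final step accordingly. The paper's reduction sidesteps the issue in a different way: for each edge $\{v_i,v_j\}$ of $G$ it \emph{keeps} the edge and attaches a path of $k$ new internal vertices in parallel with it, so that $G$ remains an induced subgraph of $G'$. The induced cycle recovered from the 1-chord subgraph then lies inside $G$ itself and inherits inducedness directly, instead of being a projection across a subdivision where chords of $G$ can hide.
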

\begin{proof}
{\bf 1-chord subgraph is in NP:} Given a certificate $C=(G,H,l)$, to witness the fact that this problem is in NP, we now present a deterministic polynomial time algorithm to verify the validity of $C$.  Observe that 1-chord graph of size $l$ has the degree sequence $(3,3,2,\ldots,2)$ with exactly $l-2$ vertices of degree 2.  Also, both degree 3 vertices are adjacent and deleting the corresponding edge between them results in a cycle.  It is now clear that the above two crucial observations along with standard Depth First Search algorithm can verify whether $C$ is valid or not in time polynomial in the input size.  Therefore, we conclude that 1-chord subgraph is in NP. \\
{\bf 1-chord subgraph is NP-hard:} We establish a polynomial time reduction from maximum induced cycle problem and its decision version is known to be NP-complete\cite{garey}.  The decision version of the problem is described as follows:
\begin{center}{\em Maximum induced cycle problem} 
\begin{tabular}{|p{14cm}|}
\hline \\ 
{\bf Instance:} Graph $G$, and an integer $l$\\
{\bf Question:} Is there a subgraph $H$ such that $|V(H)| \geq l$ and $G[V(H)]$ is a cycle ? \\
\hline 
\end{tabular}
\end{center} 
Given an instance $(G,H,k)$ of induced cycle problem, we construct an instance $(G',H',2k)$ of 1-chord subgraph as follows: $V(G')=V(G) ~ \cup ~\{ \{ v_{ij}^1,v_{ij}^2,\ldots,v_{ij}^k\} ~|~ \{v_i,v_j\} \in E(G)\}$ and $E(G')=E(G) ~ \cup ~ \{\{v_{ij}^p,v_{ij}^{p+1}\} ~|~ 1 \leq p \leq k-1\} ~\cup~ \{\{v_i,v_{ij}^1\}, \{v_{ij}^k,v_j\}\}$.  An example is illustrated in Figure \ref{1chordredn}.  We now show that $(G,H,k)$ has an induced cycle of size at least $k$ if and only if $(G',H',2k)$ has a 1-chord subgraph of size at least $2k$.  For {\em only if} claim, $G$ contains an induced cycle $H$, $|V(H)| \geq k$.  By our construction of $G'$, for any edge $e=\{v_i,v_j\} \in E(H)$ there is a path $P_{v_iv_j}$ using the vertex set $\{ v_{ij}^1,v_{ij}^2,\ldots,v_{ij}^k\}$.  Clearly, in $G'$, $V(H)$ together with $\{ v_{ij}^1,v_{ij}^2,\ldots,v_{ij}^k\}$ induce a 1-chord subgraph with $e$ as the unique chord.   Thus, we have constructed  in $G'$, a 1-chord subgraph $H'$ such that $|V(H')| \geq 2k$.  For {\em if} claim, $G'$ contains 1-chord subgraph $H'$ of size at least $2k$.  Note that if such a $H'$ exists in $G'$ then either $V(H') \cap \{ v_{ij}^1,v_{ij}^2,\ldots,v_{ij}^k\} = \phi$ or  $\{ v_{ij}^1,v_{ij}^2,\ldots,v_{ij}^k\} \subset V(H')$.  Also, $H'$ has two induced cycles with at least one is of size at least $k$.  This implies that, in either case,  there exists an induced cycle $H$ in $G$ such that $|V(H)| \geq k$.  Hence the claim.  Note that $|V(G')|=|V(G)|+k|E(G)|$ and $|E(G')|=(k+2)|E(G)|$ and $G'$ can be constructed in $O(k|E(G)|)$ time.  For non trivial cases, $k$ will always be less than or equal to $|E(G)|$, so time complexity will be $O(|E(G)|^2)$.  Thus, we have established a polynomial time reduction from induced cycle problem to 1-chord subgraph problem.  As a consequence, we conclude that deciding 1-chord subgraph is NP-hard.  Therefore, 1-chord subgraph problem is NP-complete.  Hence the theorem. \qed
\begin{figure}
\begin{center}
\includegraphics[scale=0.75]{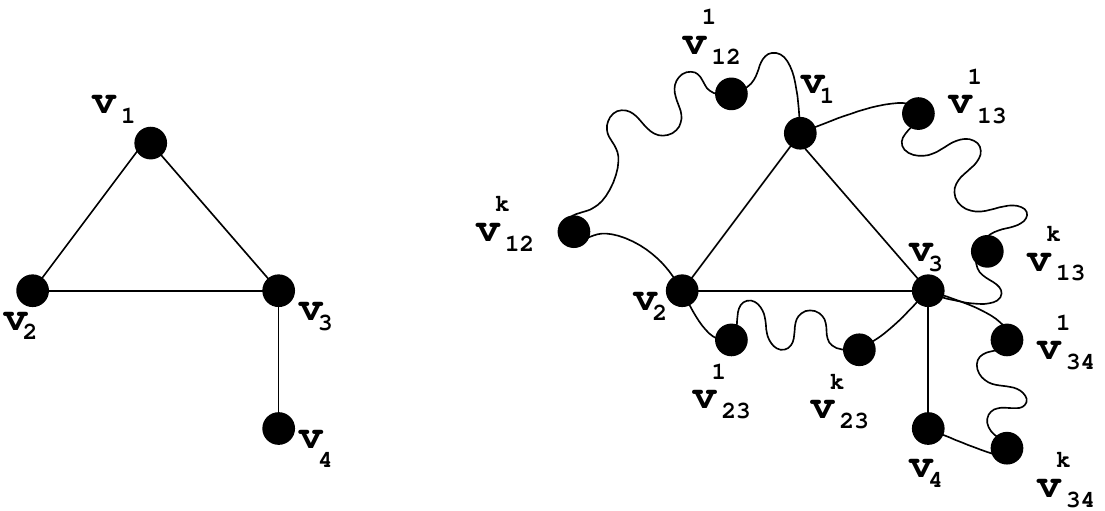}
\caption{Reducing an instance of induced cycle problem to 1-chord graph problem}\label{1chordredn}
\end{center}
\end{figure}
\end{proof}
{\bf Remark:} It is interesting to compare our NP-hardness result with an open problem posed in \cite{spinrad-open} and it is the following: can we determine the longest cycle without crossing chords in polynomial time?  Note that maximum induced cycle problem and maximum 1-chord subgraph are special cases of this problem and both are known to be NP-complete.  With these observations we believe that the above problem may not have a polynomial time algorithm.\\
{\bf Other Observations:} As far as approximation algorithm is concerned for the above problem, there is no polynomial time approximation algorithm with approximation ratio $O(n^{1-\epsilon})$ for any $\epsilon \geq 0$.  This is true due to the following observation.  NP-hard reduction of induced cycle problem is from independent set problem and this reduction is an approximation ratio preserving reduction.  Since independent set does not have an approximation algorithm with approximation ratio $O(n^{1-\epsilon})$ for any $\epsilon \geq 0$, it follows that induced cycle problem does not have an approximation algorithm with approximation ratio $O(n^{1-\epsilon})$ for any $\epsilon \geq 0$.  Moreover, NP-hard reduction of Theorem \ref{1chordredn} is also an approximation ratio preserving reduction and hence we conclude that 1-chord subgraph problem does not have an approximation algorithm with approximation ratio $O(n^{1-\epsilon})$ for any $\epsilon \geq 0$.  As far as parameterized complexity is concerned with parameter as the size of 1-chord subgraph, we observe that parameterized 1-chord subgraph is $W[1]$-hard. i.e. there is no parameterized algorithm with parameter as the size of 1-chord subgraph.  More about parameterized complexity can be found in the book by Downey and Fellows \cite{downey}. The above result follows from the fact that parameterized independent set is $W[1]$-hard and NP-hard reduction of independent set to induced cycle and NP-hard reduction of Theorem \ref{1chordredn} are parameterized reduction.  Therefore, parameterized 1-chord subgraph is $W[1]$-hard. 
\section{Structural Characterization of Matching Edge Separators}
We shall now focus on an analogous question, which is to characterize the graph class such that any minimal edge separator induces a matching.  An edge separator of a graph $G$ is a set $E' \subseteq E(G)$ such that the induced subgraph $G'$, $V(G')=V(G)$ and $E(G')=E(G)\setminus E'$ has two connected components.  We denote the induced subgraph $G'$ by $G \setminus E'$.  An edge separator $E'$ is called a $a$-$b$ edge separator if $E'$ disconnects $a$ and $b$. i.e. the graph $G \setminus E'$ has two connected components such that $a$ and $b$ are in distinct components.  A $a$-$b$ edge separator is said to be a minimal $a$-$b$ edge separator iff no proper subset of it is a $a$-$b$ edge separator.  A minimum $a$-$b$ edge separator is a minimal $a$-$b$ edge separator of least size.  
\begin{theorem}
A graph $G$ is such that for all $a,b \in V(G)$, all minimal $a$-$b$ edge separators induce a matching if and only if $G$ is a tree
\end{theorem}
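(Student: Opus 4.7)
The plan is to prove each direction of the biconditional separately.

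For the necessity (tree $\Rightarrow$ property): suppose $G$ is a tree. Between any two vertices $a, b$ there is a unique path $P_{ab}$. Any $a$-$b$ edge separator must contain at least one edge of $P_{ab}$, because every $a$-$b$ walk in $G$ must use $P_{ab}$. Conversely, any single edge of $P_{ab}$, when removed, already splits the tree into two components placing $a$ and $b$ on opposite sides. Hence every minimal $a$-$b$ edge separator is a singleton $\{e\}$ with $e \in P_{ab}$, which trivially induces a matching.

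For the sufficiency (property $\Rightarrow$ tree), I would argue the contrapositive. Assuming $G$ is connected (otherwise one applies the argument componentwise), if $G$ is not a tree then it contains a cycle $C = v_1 v_2 \cdots v_k v_1$ with $k \geq 3$. Let $K$ denote the connected component of $v_2$ in $G - v_1$. The subpath $v_2 v_3 \cdots v_k$ of $C$ avoids $v_1$, so $v_k \in K$ as well. Define
\[
S \;=\; \{\, v_1 u : u \in N(v_1) \cap K \,\}.
\]
I claim $S$ is a minimal $v_1$-$v_2$ edge separator that is not a matching. First, $S$ separates $v_1$ from $v_2$: after removing $S$, the surviving edges incident to $v_1$ go only into components of $G - v_1$ other than $K$, so $v_1$ is no longer reachable from $v_2 \in K$. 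Second, $S$ is minimal: for every edge $v_1 u \in S$, restoring $v_1 u$ reconnects $v_1$ to $u \in K$ and hence to $v_2$, so no proper subset of $S$ separates. Finally, $S$ contains the two distinct edges $v_1 v_2$ and $v_1 v_k$, which share the endpoint $v_1$, so $S$ is not a matching, contradicting the hypothesis.

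The main care point in the backward direction is the precise choice of $S$. The naive set $\delta(v_1)$ of all edges incident to $v_1$ need not be minimal when $v_1$ has neighbors in several components of $G - v_1$, and the two cycle edges $\{v_1 v_2, v_1 v_k\}$ alone need not form a separator. Restricting to edges from $v_1$ into the component of $v_2$ simultaneously guarantees that $S$ is an edge cut (so it separates) and tight on each element (so it is minimal), while the cycle through $v_1$ supplies the two edges $v_1 v_2$ and $v_1 v_k$ that witness the failure of the matching condition, regardless of whether $v_1$ happens to be a cut-vertex of $G$.
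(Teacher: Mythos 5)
Your proof is correct and follows essentially the same route as the paper: the unique-path argument handles the tree direction, and for the converse both proofs build a star edge cut at one vertex of a cycle, restricted to the side containing the rest of the cycle, and verify it is a minimal $a$-$b$ edge separator whose edges all share that vertex. Your component-based definition of $S$ (edges from $v_1$ into the component of $v_2$ in $G - v_1$) coincides with the paper's path-based set $E'$, but phrasing it via components and using the adjacent pair $v_1, v_2$ is slightly cleaner and sidesteps the paper's need to choose two \emph{nonadjacent} vertices on the cycle, which is impossible when the only cycles are triangles.
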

\begin{proof}
For {\em if claim}, since $G$ is a tree, it is a well known fact that between any pair $(a,b)$ of vertices there exists exactly one path between $a$ and $b$.  This implies that any minimal $a$-$b$ edge separator contains exactly one edge, which is a matching.  Hence the claim.  For {\em only if claim},  we present a proof by the method of contradiction. Suppose $G$ is not a tree.  This implies there exists a cycle $C$ in $G$. Now let us consider two vertices $a$ and $b$ in $C$ such that $\{a,b\} \notin E(G)$.  It is a well known fact that for any two vertices in $C$ there exists two edge disjoint paths between them.  In particular, this observation is true for $a$ and $b$. Let ${\cal P}=\{ P_{ab} ~|~ P_{ab}$ is a path between $a$ and $b$ not containing $a$ as an internal vertex $\}$ and $X= \{x ~|~ x \in V(P_{ab}), P_{ab} \in {\cal P}\}$.  In other words, the set $X$ is the set of vertices which lie on at least one path $P_{ab}$.  Consider the set $Y=N_G(a) \cap X$, that is $Y$ is set of all vertices which are adjacent to $a$ and lie on a path from $a$ to $b$ which does not contain $a$ as an internal vertex. Now we will prove that the set of edges $E'=\{\{a,x\}~|~ x \in Y \} $ form a minimal $a$-$b$ edge separator in $G$.  Since $a$ and $b$ lie on $C$ it is clear that $E'$ must contain the edges $\{a,y\}$ and $\{a,z\}$ where $y,z \in V(C)$.  This is true due to the fact that $\{a,y\}$ and $\{a,z\}$ lie on two edge disjoint paths between $a$ and $b$.   What follows is that $|E'| \geq 2$ and $E'$ is not
a matching, because any two edges in $E'$ has $a$ as the common vertex.  We now show that $E'$ is a $a$-$b$ edge separator.  Suppose $E'$ is not a $a$-$b$ edge separator, then there is a path $P'_{ab}$ from $a$ to $b$ in $G$ not containing any edge from $E'$.  Without loss of generality we assume that $P'_{ab}$ does not contain $a$ as an internal vertex.  If $P'_{ab}$ contains $a$ as an internal vertex
then we can remove all the vertices until the last $a$ in the path and still get a path from $a$ to $b$.  Let $u$ be the vertex adjacent
to $a$ in $P'_{ab}$, then clearly $u \in X$ and since $\{a,u\} \in E(G)$,  $u \in Y$ as well.  This implies $\{a,u\} \in E'$, which is a contradiction to our assumption that $P'_{ab}$ does not contain any edges from $E'$.  Therefore, $E'$ is a $a$-$b$ edge separator.  Now to prove the minimality of $E'$, suppose $E'$ is not minimal then there is at least one edge $e=\{a,x\} \in E'$ such that
$E' \setminus \{e\} $ is still a $a$-$b$ edge separator in $G$.  Since $e \in E'$, we know that $x \in X$ and there is at least one path $P_{ab} \in {\cal P}$, containing $x$ as an internal vertex. Now consider the sub path $P'$ of $P_{ab}$ from $x$ to $b$. It is clear that $a\notin P'$ as $P_{ab}$ by definition does not contain $a$ as an internal vertex.  So there does not exist $y \in V(G)$ such that $\{a,y\} \in E(G)$ and $\{a,y\}$ is an edge of the path $P'$.   In what follows, the edge $\{a,x\}$ together with the path $P'$ yields a path from $a$ to $b$, which does not contain any edge from $E' \setminus \{e\}$, giving rise to a contradiction that $E \setminus \{e\} $ is a $a$-$b$ edge separator in $G$. Hence we conclude that $E'$ is a minimal $a$-$b$ edge separator which is not a matching.  A contradiction to the hypothesis.  Therefore our assumption that $G$ is not a tree is wrong.  Hence the theorem. \qed
\end{proof}
%\begin{lemma}
%If a graph $G$ has for every pair of vertices at least one minimal edge separator which is a matching then $G$ is triangle free.
%\end{lemma}
%\begin{proof}
%Suppose $G$ is not triangle free. Let $\{a,b,c\}$ induce a triangle $A$ in $G$.  Note that any $a$-$b$ edge separator $E'$ must contain atleast two edges from $A$.  However, any two edges on $A$ must have at least one vertex in common. So, any $a$-$b$ edge separator in $G$ is not a matching.  A contradiction.  Therefore, $G$ is triangle free. 
%\end{proof}
\bibliographystyle{splncs}
\bibliography{references}
\end{document}